\journal{Social Choice and Welfare.}
\DeclareRobustCommand{\da}{\ensuremath{\mathtt{DA}}}
\DeclareRobustCommand{\ia}{\ensuremath{\mathtt{BM}}}
\DeclareRobustCommand{\rsd}{\ensuremath{\mathtt{SD}}}
\newtheorem{proposition}{Proposition}
\newtheorem{conjecture}{Conjecture}
\theoremstyle{definition}
\patchcmd{\emailauthor}{(#2)}{}{}{}
\patchcmd{\urlauthor}{(#2)}{}{}{}
\newcommand{\rk}{\textup{rk}} 
\newcommand{\ttt}{\mathtt f} 
\newcommand{\rrr}{\mathtt r} 
\newcommand\citepos[1]{\citeauthor{#1}'s (\citeyear{#1})}
\begin{document}
	
	\begin{frontmatter}
		\title{A Note on the Strategic Vulnerability of the Boston Mechanism in Random Markets\tnoteref{}}
		\tnotetext[]{I am grateful to the Associate Editor and two anonymous reviewers for their excellent comments, which significantly improved the paper. Accepted version to \emph{Social Choice and Welfare}.}
		\author{Josué Ortega}
		\ead{j.ortega@qub.ac.uk.}
		\address{Queen's University Belfast, UK.}
		
		\begin{abstract}
		We provide the first asymptotic analysis of the Boston Mechanism under equilibrium play in random markets. We provide two results. 
		
		First, while 63\% of students receive their first preference under truthful reporting—outperforming any other known mechanism in the literature—this rate converges to zero in any Nash equilibrium of the corresponding preference revelation game as the market size grows.

		Second, we show there exists a Nash equilibrium where the average student receives a dramatically inferior assignment: in markets with 1,000 students, the average placement shifts from the 7th choice (under truthfulness) to the 145th choice, representing a change from logarithmic to nearly linear average rank.

		\end{abstract}
		
		\begin{keyword}
			Boston Mechanism\sep first preference rate \sep average rank.\\
			{\it JEL Codes:} C78, D47.
		\end{keyword}
	\end{frontmatter}
	
	\newpage
	\setcounter{footnote}{0}
	\setlength{\parskip}{1em} 
	
	\section{Introduction}
	\label{sec:introduction}
	School choice mechanisms are central to education policy, with the Boston Mechanism (BM, also known as Immediate Acceptance) remaining one of the most widely adopted approaches despite concerns about its incentive properties \citep{abdulkadirouglu2003}. 
	A central metric for evaluating these mechanisms—and one often emphasized by policymakers—is the \textbf{first preference rate}: the proportion of students assigned to their most preferred school. 
	For example, England's Department for Education publishes the {first preference rate} for primary and secondary schools annually since 2014.\footnote{See  \url{https://explore-education-statistics.service.gov.uk/find-statistics/secondary-and-primary-school-applications-and-offers}.}  A second important efficiency metric is the \textbf{average rank}, which measures the mean position of assigned schools in students' preference rankings. While less commonly reported by practitioners, this metric is widely studied in the academic literature as it captures information about assignment quality across the full distribution of outcomes.\footnote{Although England does not publish the average rank, it does publish the second and third preference rate in the link above.}
	
	In this note, we quantify BM's first preference rate and average rank in every Nash equilibrium of its corresponding preference revelation game under complete information. Our paper complements \citeauthor{pritchard2023asymptotic}'s (\citeyear{pritchard2023asymptotic}, henceforth PW) seminal work analyzing both metrics under the BM, albeit under truthful preference revelation. We find that accounting for strategic behavior significantly changes their results.
	
	We focus on random markets with preferences and priorities drawn independently and uniformly at random and with $n$ students and $n$ schools. In this setup, PW show that BM's first preference rate converges to a constant ($1 - e^{-1} \approx 0.63$) as the market size grows large, assuming truthful preference revelation. Their result demonstrates BM's superior first preference rate compared to other well-known Pareto-efficient mechanisms. For instance, Serial Dictatorship (SD), Top Trading Cycles (TTC) and Rank-Minimizing (RM) all attain a first preference rate of only 0.5 asymptotically \citep{ortega2023cost}.\footnote{
	Other mechanisms are also first-choice maximal, but BM is the only well-known mechanism in this class \citep{dur2018first}.} However, under strategic behavior, this advantage vanishes completely: in \emph{any} Nash equilibrium of the corresponding preference revelation game, BM's first preference rate converges to zero at a rate of $1/\log(n)$, highlighting the extreme vulnerability of the Boston mechanism to strategic manipulation.
	
	Our second finding concerns the expected placement of the average student under BM. Under truthful behavior, PW show that BM's average placement is better (i.e. smaller) than under serial dictatorship, which is in the order of $\log n$. However, we show that when accounting for strategic behavior, the effect of strategic play on average placement depends on the specific equilibrium selection. 
	While there exists a Nash equilibrium of BM's preference revelation game in which the average rank is around $\log n$ too, leading to a small difference with the truthful case, there exist other equilibria where the average rank increases substantially on the order of $n/\log n$, leading to stark differences between the truthful and strategic analysis. 
	To put these findings in context: in a market with $n=1000$ students and schools, the average student is assigned to their 7-th choice under truthfulness, while under some Nash equilibrium they are assigned to their 145-th choice.
	
	Our findings quantify BM's well-known vulnerability to strategic behavior, showing significant effects on performance statistics widely used by education policymakers.
		
	\section{Related Literature}
	\label{sec:literature}
	
	The random markets approach has proven powerful for analyzing allocation mechanism performance, enabling tractable asymptotic analysis while providing insightful theoretical results. This framework has been particularly fruitful for studying the Deferred Acceptance (DA) mechanism, where interdisciplinary research spanning economics and computer science has built upon the seminal works of \citet{wilson1972} and \citet{knuth1976} to produce an extensive literature \citep{pittel1989,pittel1992likely,immorlica2005,kojima2009incentives,lee2016incentive,liu2016,che2019efficiency,ashlagi2017,pycia2019evaluating,nikzad2022rank,ortega2023cost,ortega2025identifying,ronen2025stable}.
	
	Similarly, the Top Trading Cycles algorithm has been studied within this framework \citep{knuth1996,frieze1995probabilistic,che2017top}, and deep equivalence results have extended some of these findings to arbitrary strategy-proof and Pareto-efficient mechanisms such as Serial Dictatorship \citep{pycia2019evaluating,che2018payoff}. A comparable analysis has been conducted for the Rank-Minimizing mechanism \citep{nikzad2022rank,sethuramannote,ortega2023cost}, as well as for the Rawlsian mechanism \citep{demeulemeester2022rawlsian}.
	
	Despite this extensive asymptotic literature, the Boston Mechanism has remained largely unstudied in large random markets, creating a notable gap given its continued widespread practical use. This gap is particularly striking because BM remains a heavily studied and implemented mechanism in school choice \citep{abdulkadirouglu2003,chen2006school,pathak2008leveling,miralles2009school,afacan2013alternative,kumano2013strategy,kojima2014boston,he2015gaming}. In particular, \citet{ergin2006} showed that the set of Nash equilibrium outcomes in BM's preference revelation game with complete information coincides precisely with the set of stable matchings under true preferences, a result that serves as the basis of our analysis. Yet while the theoretical literature has extensively analyzed BM's strategic properties in finite markets, its asymptotic performance under equilibrium play remained unexplored.
	
The first systematic analysis of BM in random markets appeared in \citet{mennle2017trade}, who establish BM's rank dominance over DA in finite markets whenever they are comparable, occurring in 8\% of preference profiles. They use data from Mexico city school allocation to validate their theoretical findings. \citet{pritchard2023asymptotic} then extended this to asymptotic analysis, deriving exact recurrence formulas to characterize the number of students matched in each round and proving BM's first preference rate converges to $1-e^{-1} \approx 0.63$. They further explore aspects beyond our scope, such as utilitarian welfare under scoring rules. Together, these seminal contributions demonstrate BM's superior welfare performance—but exclusively under truthful behavior. 

Our paper adopts the same random markets framework where preferences and priorities are drawn independently and uniformly at random, but extends the analysis to strategic behavior. This allows us to mirror \citepos{pritchard2023asymptotic} asymptotic results under equilibrium play and quantify how strategic incentives reshape BM's documented welfare advantages, bridging the gap between BM's known manipulability and its performance in large random markets.
	
	\section{Model }
	\label{sec:model}
	
	For simplicity, we focus on a one-to-one matching problem, which consists of:
	\begin{enumerate}
		\item A set of students $I = \{i_1, \ldots, i_n\}$,
		\item A set of schools $S = \{s_1, \ldots, s_n\}$, with each school having capacity for one student only,
		\item Strict students' preferences over schools, $\succ \coloneqq (\succ_1, \ldots, \succ_n)$, and
		\item Strict schools' priorities over students, $\triangleright \coloneqq (\triangleright_{s_1}, \ldots, \triangleright_{s_n})$.
	\end{enumerate}
	
	A \emph{matching problem} $P$ is a four-tuple $(I,S,\succ,\triangleright)$. We refer to $n$ as the size of the problem. An \emph{allocation} $x$ assigns each student to exactly one school. We denote by $x_i$ student $i$'s assigned school.

	The function $\rk_i(x_i)$ returns an integer between 1 and $n$ corresponding to the ranking of $x_i$ in student $i$'s preference list (with 1 representing the most preferred school). A \emph{mechanism} is a map from matching problems to allocations.
	
	A \emph{mechanism} $M$ returns an allocation $M(P)$ for every matching problem $P$. We mainly focus on the Boston Mechanism (BM), which is also known as Immediate Acceptance. It operates as follows:
	
	\begin{itemize}
		\item In the first round, each student applies to their most preferred school. Each school that receives at least one application immediately and permanently accepts the applicant with the highest priority, removing both itself and the matched student from the market.
		
		\item In each subsequent round $k$, each remaining student applies to their $k$-th preferred school. Each remaining school that receives at least one application immediately and permanently accepts the highest-priority applicant.
		
		\item The process continues until all students are assigned to a school.
	\end{itemize}
	
	We compare BM to the student-proposing Deferred Acceptance (DA). DA operates similarly to BM, but schools’ acceptances are temporary: a student may be rejected in later rounds if a higher-priority applicant applies.
	We denote by $\ia(P)$ and $\da(P)$ the allocation produced by BM and DA for the matching problem $P$, respectively.
	
	\subsection{Performance Metrics}
	We are interested in the \emph{first preference rate} $\ttt(M(P))$, defined as the fraction of students assigned to their most preferred school in the allocation $M(P)$:
	\begin{equation}
		\ttt (M(P)) \coloneqq \frac{|\{i \in I: \operatorname{rk}_i(x_i)=1\}|}{n}
	\end{equation}
	
	Additionally, we consider the \emph{average rank}, denoted by $\rrr(M(P))$, which is the average position of the assigned school in students' preference lists:
	\begin{equation}
		\rrr(M(P)) \coloneqq \frac{1}{n} \sum_{i \in I} \operatorname{rk}_i(x_i)
	\end{equation}

\subsection{The Preference Revelation Game}
Following \citet{ergin2006}, we model strategic behavior in the Boston mechanism via the induced preference revelation game under complete information. 
Each student $i \in I$ chooses a strict ranking $\succ_i$ over $S$. 
Given $\succ$, the Boston mechanism applied to the matching problem $(I,S,\succ,\triangleright)$ produces the allocation $\ia(I,S,\succ,\triangleright)$.

We adopt a complete information environment: before submitting their lists, all students know the true preference profile $\succ$ and all school priorities $\triangleright$. We maintain this assumption throughout. We discuss the robustness of our results to incomplete information in the Conclusion.

A Nash equilibrium of the preference revelation game induced by the Boston mechanism is a profile $\succ^\ast$ such that for every student $i \in I$ and every alternative ranking $\succ'_i$, the assignment under $\ia(I,S,\succ^\ast,\triangleright)$ is at least as good for $i$, according to the true preferences $\succ_i$, as the alternative assignment under $\ia(I,S,(\succ'_i,\succ^\ast_{-i}),\triangleright)$.

\subsection{Random Markets}
To analyze the expected behavior of a mechanism—rather than its performance on a specific matching problem—we consider a random matching problem in which strict preferences and priorities are drawn independently and uniformly at random. While this assumption abstracts from real-world complexities, it serves as a standard theoretical benchmark that enables tractable analysis and has been extensively used in the literature discussed in Section~\ref{sec:literature}, including PW's work that serves as the starting point of our analysis. 

\paragraph{Truthful Behavior} 
We denote by $\ttt(\ia_n)$ and $\rrr(\ia_n)$ the expected first preference rate and average rank in a random matching problem of size $n$ under BM with truthful preference revelation. The expectation is taken over the random draw of preferences and priorities. For any other mechanism $M$, we define $\ttt(M_n)$ and $\rrr(M_n)$ analogously.

\paragraph{Strategic Behavior}
To define performance under strategic play consistently across all realizations of $(\succ,\triangleright)$, we introduce an \emph{equilibrium selection rule}: a function $\sigma$ that assigns to each preference–priority pair one of its Nash equilibrium outcomes (equivalently, one of its stable matchings). Given such a rule $\sigma$, let $\ttt(\ia_n^\sigma)$ and $\rrr(\ia_n^\sigma)$ denote the expected first preference rate and average rank associated with the equilibrium outcome selected by $\sigma$ at each realized $(\succ,\triangleright)$, where the expectation is again taken over the random draw of preferences and priorities.

Our results characterize performance across all possible equilibrium selection rules. Proposition~\ref{prop:top_choice_strategic} establishes that $\lim_{n \to \infty} \ttt(\ia_n^\sigma) = 0$ for \emph{every} selection rule $\sigma$, so that the vanishing first preference rate is universal across equilibrium selections. In contrast, Proposition~\ref{prop:avg_strategic} shows that the limiting average rank depends on the equilibrium selection: we characterize its range by computing the infimum and supremum over all selection rules $\sigma$.

	\section{Results}
	\label{sec:truthful}
	\subsection{First Preference Rate}
	
	\paragraph{Truthful Behavior}
	We begin by describing PW's result quantifying BM's first preference rate under truthful behavior.
	
	\begin{proposition}[\cite{pritchard2023asymptotic}]
		\label{prop:top_choice_truth}
		Under truthful preference revelation, the expected first preference rate of the Boston mechanism approaches $1-e^{-1} \approx 0.63$ as the market size grows large:
		\begin{equation}
			\lim_{n \to \infty} \ttt(\ia_n) = 1 - e^{-1}
		\end{equation}
	\end{proposition}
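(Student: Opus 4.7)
The plan is to reduce the first preference rate to a classical balls-in-bins calculation. Observe that, under truthful reporting, the only round that matters for the first preference rate is round 1 of IA: a student receives her top-ranked school if and only if, among all students who list that school first, she has the highest priority; in particular, she is matched in round 1. Hence the number of students who obtain their first choice equals the number of schools that receive at least one applicant in round 1.

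First I would formalize the distribution of round-1 applications. Because $\succ_i$ is drawn uniformly at random from the $n!$ strict orders on $S$ independently across $i$, each student's top choice is uniformly distributed on $S$ and independent of the others. This is exactly the balls-in-bins model with $n$ balls (students) thrown uniformly into $n$ bins (schools).

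Next, let $X_s$ denote the indicator that school $s$ receives at least one applicant in round 1. Then $\sum_{s\in S} X_s$ equals the number of students assigned to their first preference, so
\begin{equation}
\EE[\ttt(\ia_n)] \;=\; \frac{1}{n}\sum_{s\in S}\Pr[X_s=1] \;=\; 1 - \left(1-\tfrac{1}{n}\right)^n,
\end{equation}
using independence of students' top choices and the fact that $\Pr[X_s=0]=(1-1/n)^n$. Letting $n\to\infty$ gives the claimed limit $1-e^{-1}$. Since $\ttt(\ia_n)\in[0,1]$, bounded convergence (or a direct concentration argument via independence of the $X_s$ up to a mild negative correlation from throwing exactly $n$ balls) upgrades the statement from expected value to the a.s./in-probability convergence that PW establish, if one prefers the stronger formulation.

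There is essentially no hard step here: the entire argument hinges on the observation that priorities are irrelevant for determining \emph{which} schools fill in round 1 (only \emph{who} fills them), and on the uniform-independent structure of round-1 applications. The only place one must be careful is in noting that a student obtains her first preference \emph{only} through round~1 — this is immediate from the definition of IA, since round $k\geq 2$ assigns students to their $k$-th preference or worse.
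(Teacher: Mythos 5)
The paper itself offers no proof of this proposition: it is imported verbatim from Pritchard and Wilson and stated with a citation only, so there is nothing internal to compare your argument against. Judged on its own terms, your proof is correct and complete. The two load-bearing observations are both right: (i) under IA a student obtains her first preference if and only if she is accepted in round 1, because a round-1 rejection means her top school has permanently admitted someone else, so the count of first-preference recipients is exactly the count of schools receiving at least one round-1 application; and (ii) with i.i.d.\ uniform preferences the round-1 applications are $n$ independent uniform balls in $n$ bins, so the expected fraction of occupied bins is $1-(1-1/n)^n \to 1-e^{-1}$. One small notational point: the paper defines $\ttt(\ia_n)$ as already being the expectation over the random market, so writing $\EE[\ttt(\ia_n)]$ is redundant (though harmless), and the concentration remark at the end is not needed for the statement as posed. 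The value of your write-up is that it makes the cited result self-contained and makes explicit \emph{why} priorities are irrelevant to the first preference rate under truthfulness — they only decide which applicant fills an occupied school, not whether it is filled.
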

	
	The intuition is straightforward. In the Boston mechanism, a student is assigned to their top choice if and only if they are assigned in the first round, and the students who are assigned in the first round equals the number of schools that receive at least one application. Since each student applies uniformly at random to one of the $n$ schools, the probability that a school receives no applications from any student is $(1-1/n)^n$, which converges to $e^{-1}$ as $n \to \infty$.
	
	Proposition \ref{prop:top_choice_truth} is particularly interesting because it shows that the Boston mechanism assigns a substantially higher fraction of students to their first preference compared to other standard efficient mechanisms such as serial dictatorship.
	\paragraph{Strategic Behavior}
	\label{sec:strategic}
	While the Boston mechanism exhibits an impressive first preference rate under truthful preference revelation, its well-known manipulability raises questions about its performance when students behave strategically. We analyze how strategic behavior affects the first preference rate in equilibrium.
	
	\begin{proposition}
		\label{prop:top_choice_strategic}
		For every equilibrium selection rule~$\sigma$, the expected first preference rate of the Boston mechanism under equilibrium play converges to zero at least as fast as $\frac{1}{\log n}$ as the market size grows:\footnote{The $O$ notation provides an upper bound on the growth rate of a function. 
			Formally, $f(n) = O(g(n))$ if there exist positive constants $c$ and $n_0$ such that 
			$|f(n)| \leq c \cdot |g(n)|$ for all $n \geq n_0$.}
		
		\begin{equation}
			\ttt(\ia_n^\sigma) = O\!\left(\frac{1}{\log n}\right),
			\qquad
			\text{and hence } \lim_{n \to \infty} \ttt(\ia_n^\sigma) = 0.
		\end{equation}
	\end{proposition}
	
	\begin{proof}
		The proof follows from three observations. 
		First, as shown by \citet{ergin2006}, the set of Nash equilibrium outcomes of the preference revelation game induced by the Boston mechanism coincides exactly with the set of stable matchings under the true preferences. Hence, any equilibrium allocation selected by~$\sigma$ must be stable.
		
		Second, from \citet{gale1962}, the student-proposing Deferred Acceptance (DA) outcome is weakly preferred by every student to any other stable allocation. Therefore, for every selection rule~$\sigma$, we have $\ttt(\ia_n^\sigma) \leq \ttt(\da_n)$.
		
		Third, in random markets, the expected number of applications submitted by a student—and equivalently, the expected number of applications received by each school—is $\Theta(\log n)$ \citep{knuth1976, pittel1989}.\footnote{The $\Theta$ notation provides both upper and lower bounds on the growth rate of a function. Formally, $f(n) = \Theta(g(n))$ if there exist positive constants $c_1, c_2, n_0$ such that $c_1 \cdot g(n) \leq f(n) \leq c_2 \cdot g(n)$ for all $n \geq n_0$.}
		Consequently, the probability that a student is assigned to their top choice under DA is $\Theta(1/\log n)$: each school receives approximately $\log n$ applications, so a student has probability $1/\log n$ of having the highest priority among them.\footnote{This result is {\it ``almost deterministic''} because the number of applications concentrates sharply around its expected value \citep{motwani1995}.} 
		Since $\ttt(\ia_n^\sigma) \leq \ttt(\da_n)$ and $\ttt(\da_n) = \Theta(1/\log n)$, it follows that $\ttt(\ia_n^\sigma) = O(1/\log n)$, and therefore $\lim_{n \to \infty} \ttt(\ia_n^\sigma) = 0$.
	\end{proof}

	This result reveals a stark contrast between truthful and strategic behavior in the Boston mechanism. While the expected first preference rate under truthful reporting converges to approximately 63\% as market size grows, under strategic behavior it converges to zero. This difference highlights how the incentives created by the Boston mechanism transform its properties substantially when participants behave strategically.
	
	\paragraph{Intuition}  
	The collapse in Proposition~\ref{prop:top_choice_strategic} stems from a cascading effect triggered by strategic behavior.  
	Under truthful reporting, students act naively: everyone ``gambles'' on their favorite school in the first round, and since applications are uniformly random, almost every school receives at least one applicant, producing the familiar $1 - e^{-1} \approx 0.63$ first preference rate.  
	
	In equilibrium, however, students anticipate rejection at popular schools and avoid wasting their first-round application where their priority is low. Each such deviation potentially displaces another student from the school they would have obtained under truth-telling, who in turn shifts their application to a lower-ranked school, which itself was potentially someone's else top choice. This chain reaction propagates through the market, eliminating the large set of ``lucky'' first-round acceptances that sustain the 0.63 constant. The resulting configuration is stable: every student applies only where they can actually be accepted given others’ behavior, which coincides with a stable matching.  
	
	Because in large random markets each school receives roughly $\log n$ total applications, only about $1 / \log n$ of students have the highest priority at their true first-choice school, causing the first preference rate to vanish asymptotically. In other words, the Boston mechanism’s remarkable performance under truthful play unravels once agents stop behaving naively and begin responding strategically to their relative priorities.
	
\subsection{Average Rank}
\label{sec:avg_rank}
\paragraph{Truthful Behavior} 
Beyond the first preference rate, the average rank is also a valuable performance statistic that allows us to assess how satisfactory the average student placement is. A well-known result is that Serial Dictatorship achieves an expected average rank of $\Theta(\log n)$ in random markets. PW provide theoretical and simulation evidence suggesting that, under truthful reporting, the Boston mechanism's expected average rank also grows as $\Theta(\log n)$, but with a smaller constant factor—meaning BM yields better (lower) average ranks than Serial Dictatorship.

\begin{conjecture}[\citealt{pritchard2023asymptotic}, p.~269]
	\label{prop:avgrank}
	Under truthful preference revelation in random markets, the expected average rank of the Boston mechanism grows as $\Theta(\log n)$, but with a smaller constant factor than Serial Dictatorship:
	\begin{equation}
		\rrr(\ia_n) = c_{\mathrm{BM}} \cdot \log n,
		\quad
		\rrr(\rsd_n) = c_{\mathrm{SD}} \cdot \log n,
		\quad
		\text{where } c_{\mathrm{BM}} < c_{\mathrm{SD}}.
	\end{equation}
\end{conjecture}

\paragraph{Strategic Behavior}
We now study how strategic behavior affects the average placement. In this setting, the performance of the Boston mechanism depends on the equilibrium selection rule~$\sigma$. We obtain the following result:

\begin{proposition}
	\label{prop:avg_strategic}
	For every market size $n$, let $\rrr(\ia_n^\sigma)$ denote the expected average rank of the Boston mechanism under equilibrium selection rule~$\sigma$. Then:
	\begin{align}
		\inf_{\sigma}\, \rrr(\ia_n^\sigma) &= \Theta(\log n), \\
		\sup_{\sigma}\, \rrr(\ia_n^\sigma) &= \Theta\!\left(\frac{n}{\log n}\right).
	\end{align}
\end{proposition}

\begin{proof}
	The argument parallels the proof of Proposition~\ref{prop:top_choice_strategic}. 
	For any matching problem, both the student-optimal and the school-optimal stable matchings are implementable as Nash equilibria of the Boston mechanism \citep{ergin2006}. 
	The student-optimal stable matching yields an expected average rank of $\Theta(\log n)$, while the school-optimal stable matching yields $\Theta(n / \log n)$ \citep{knuth1976,pittel1989}. 
	Because students weakly prefer the student-optimal stable matching to any other stable matching, and weakly disprefer the school-optimal one, these two allocations achieve the infimum and supremum of expected average ranks across all equilibrium selection rules~$\sigma$.
\end{proof}

Proposition \ref{prop:avg_strategic} reveals that strategic behavior can undermine average rank less severely than the first preference rate. Unfortunately, the worst-case average rank would likely emerge if the number of students is higher than of school seats, as \cite{ashlagi2017} show that in unbalanced markets where students are in excess supply, all stable matchings yield average ranks as poor as the school-optimal one.

\paragraph{Intuition}  
The first preference rate obscures a crucial feature of the Boston mechanism: even under naive behavior, it operates through a built-in cascade.  
Students who fail to secure their favorite school in the first round immediately reapply to their next option, which may already be filled by higher-priority applicants, forcing them to move again in subsequent rounds.  
This displacement process continues until every student finds a remaining vacancy, generating an average rank of order $\Theta(\log n)$ rather than a constant.  
Hence, while the first preference rate under truth-telling appears remarkably high, the average rank already reveals the underlying cascade structure of Boston and highlights its efficiency is not as good as it may seem at first (and far from the first-best constant average rank achieved by the rank-minimizing mechanism).  

In equilibrium, BM's application cascade becomes anticipatory: students foresee where they will be rejected and skip those schools entirely.
If the preemptive adjustments are mild, the cascade stops early and the resulting allocation roughly coincides with the student-optimal stable matching, preserving the $\Theta(\log n)$ average rank.  
If the adjustments propagate deeply, the cascade extends through the entire market, pushing almost every student far down their preference list and leading to the school-optimal stable matching with average rank $\Theta(n / \log n)$.  
Thus, unlike Proposition \ref{prop:top_choice_strategic}, where strategic reasoning creates a cascade that destroys the 0.63 constant, Proposition \ref{prop:avg_strategic} shows that Boston already contains a cascade, and strategic behavior merely determines how far it runs.

\section{Discussion}

\paragraph{Conclusion}
We have quantified how strategic behavior affects the strong performance of the Boston Mechanism in random markets with respect to two key performance statistics. Our main finding shows that, while the Boston Mechanism assigns more students to their first choice than any other known mechanism under truthful reporting, this advantage completely disappears once strategic behavior is taken into account. The effect of strategic play on average placement is more nuanced, as the outcome depends on the specific equilibrium selection.

In establishing these stark results, we have used the standard random market model that serves as the workhorse in the literature and which allows us to compare our results to previous work. While this model assumes independent preferences and priorities, the scale of the welfare loss we document suggests that BM's strategic problems are fundamental rather than artefacts of our modelling choices. Future work could explore whether our results extend to more general environments, such as those with correlated preferences or many-to-one matching, but would require additional tools for their analysis.

To conclude, the remarkable discrepancy between the truthful and strategic settings underscores the importance of accounting for incentives in the study of allocation mechanisms and serves as a first step towards a comprehensive asymptotic equilibrium analysis of the Boston Mechanism in random markets.

\paragraph{Robustness to Incomplete Information}
Our analysis has focused on the complete information version of the preference revelation game induced by the Boston mechanism. 
This assumption allows us to exploit the characterization linking Nash equilibrium outcomes of the Boston mechanism to the set of stable matchings under true preferences. 
Extending the analysis to incomplete information environments, where students know only their own preferences and hold beliefs about others’, remains an important direction for future research, since the Ergin–Sönmez equivalence between equilibrium outcomes and stability breaks down once information becomes private. 
In such settings, the Boston mechanism can benefit some students at the expense of others due to coordination failures in reporting behavior, and thus the overall impact of incomplete information on efficiency metrics is ambiguous.\footnote{There is important work relating the ordinal Bayesian Nash equilibria of the incomplete information game with the Nash equilibria of the complete information game, but it only applies to stable mechanisms \citep{ehlers2015matching}.}

A second question concerns whether the large efficiency losses caused by strategic behavior can be mitigated by specific equilibrium refinements. 
We note that Proposition \ref{prop:top_choice_strategic} applies to the first preference rate of any equilibrium of the induced preference revelation game; hence, under any selection, the share of students obtaining their top choice converges to zero (the only difference may lie in the speed of convergence, which is arguably of second-order importance given the dramatic decay). 
For Proposition~\ref{prop:avg_strategic}, the issue is more subtle, as the increase in average rank due to strategic behavior depends heavily on the equilibrium selected. 
We are not aware of any Nash refinement that yields exclusively the student-optimal or the school-optimal stable matchings. 
Whether such a refinement exists remains an interesting open question.

\paragraph{Robustness to Asymptotic Framework}
In this paper, we have set the number of schools to coincide with the number of students, as standard in the random market literature. An alternative large market approach would fix the number of schools $m$ and study the asymptotic properties of the Boston mechanism as the student-to-school ratio $n/m$ grows. 
In this thicker market regime, each school receives approximately $n/m$ first round applications under truthful play, so the first preference rate converges to zero even without strategic behavior.

The strategic gap between truthful and equilibrium behavior might, however, be smaller in such markets. 
When many students compete for each seat, priority orderings almost fully determine outcomes: low-priority students cannot manipulate their way into competitive schools regardless of strategy, while high-priority students face little downside from truthful reporting since desirable schools fill immediately. 
This suggests that although absolute performance deteriorates for all mechanisms in thick markets, the relative effect of strategic behavior may be less severe. 
Whether the Boston mechanism's strategic fragility is primarily a feature of balanced markets or persists in thick-market asymptotics remains an open question for future research.

	\setlength{\bibsep}{0cm}

\end{document}